
\documentclass[preprint,11pt]{elsarticle}

\usepackage{algorithm}
\usepackage[noend]{algorithmic}
\usepackage{algorithmicext}
\usepackage{ifthen}
\newconstruct{\PROC}{\textbf{procedure}}{}{\ENDPROC}{\textbf{end procedure}}
\newconstruct{\FUNC}{\textbf{function}}{}{\ENDFUNC}{\textbf{end function}}
\usepackage{fullpage}
\usepackage{lmodern}
\usepackage{comment}
\usepackage{amsthm,amsmath,amssymb}
\usepackage{graphicx}

\usepackage{pstricks,pst-node,pst-tree}
\usepackage{soul}

\usepackage{color}
\usepackage{mathrsfs }
\usepackage[normalem]{ulem}
\usepackage{paralist}
\newcommand{\msg}[1]{\langle#1\rangle}
\renewcommand{\epsilon}{\varepsilon}
\newcommand{\Prob}[1]{\hbox{\rm I\kern-2pt P}\left[#1\right]}

\DeclareMathAlphabet{\mathsc}{OT1}{cmr}{m}{sc}

\newtheorem{theorem}{Theorem}
\newtheorem{lemma}{Lemma}
\newtheorem{corollary}{Corollary}

\renewcommand{\ge}{\geqslant}
\renewcommand{\leq}{\leqslant}
\renewcommand{\le}{\leqslant}
\newcommand{\set}[1]{\{#1\}}

\newcommand{\eps}{\varepsilon}

\def\elected{\mbox{\small ELECTED}}
\def\nonelected{\mbox{\small NON-ELECTED}}

\newboolean{short}
\setboolean{short}{false}

\newcommand{\shortOnly}[1]{\ifthenelse{\boolean{short}}{#1}{}}
\newcommand{\onlyShort}[1]{\ifthenelse{\boolean{short}}{}{#1}}
\newcommand{\longOnly}[1]{\ifthenelse{\boolean{short}}{}{#1}}
\newcommand{\onlyLong}[1]{\ifthenelse{\boolean{short}}{}{#1}}

\newcommand{\squishlist}{
 \begin{list}{$\bullet$}
  { \setlength{\itemsep}{0pt}
     \setlength{\parsep}{3pt}
     \setlength{\topsep}{3pt}
     \setlength{\partopsep}{0pt}
     \setlength{\leftmargin}{1.5em}
     \setlength{\labelwidth}{1em}
     \setlength{\labelsep}{0.5em} } }

\newcommand{\squishlisttwo}{
 \begin{list}{$\bullet$}
  { \setlength{\itemsep}{0pt}
     \setlength{\parsep}{0pt}
    \setlength{\topsep}{0pt}
    \setlength{\partopsep}{0pt}
    \setlength{\leftmargin}{2em}
    \setlength{\labelwidth}{1.5em}
    \setlength{\labelsep}{0.5em} } }

\newcommand{\squishend}{
  \end{list}  }

\begin{document}
\title{Sublinear Bounds for Randomized Leader Election}
\author[technion]{Shay Kutten\fnref{fn1}}
\ead{kutten@ie.technion.ac.il}

\author[ntu,brown]{Gopal Pandurangan\fnref{fn2}}
\ead{gopalpandurangan@gmail.com}
\author[weiz]{David Peleg\fnref{fn3}}
\ead{david.peleg@weizmann.ac.il}
\author[ntu]{Peter Robinson\fnref{fn4}}
\ead{peter.robinson@ntu.edu.sg}
\author[technion]{Amitabh Trehan\fnref{fn1}}
\ead{amitabh.trehaan@gmail.com}

\address[technion]{Information Systems Group, Faculty of Industrial 
  Engineering and Management, Technion - Israel Institute of Technology, 
  Haifa-32000, Israel.}

\address[ntu]{Division of Mathematical
  Sciences, Nanyang Technological University, Singapore 637371}

\address[brown]{Department of Computer Science, Brown University, Box 1910, Providence,
RI 02912, USA.}

\address[weiz]{Department of Computer Science and Applied Mathematics, Weizmann Institute of Science, Rehovot-76100 Israel.}

\fntext[fn1]{Supported by the Israeli Science Foundation and by the Technion TASP center.}
\fntext[fn2]{Research supported in part by the following  grants: Nanyang
Technological University grant M58110000, Singapore Ministry of Education
(MOE) Academic Research Fund (AcRF) Tier 2 grant MOE2010-T2-2-082,
and a grant from the US-Israel Binational Science
Foundation (BSF).}
\fntext[fn3]{Supported in part by the Israel Science Foundation (grant 894/09),
the United States-Israel Binational Science Foundation (grant 2008348),
and the Israel Ministry of Science and Technology (infrastructures grant).}
\fntext[fn4]{Research supported in part by the following  grants: Nanyang Technological University grant M58110000, Singapore Ministry of Education (MOE) Academic Research Fund (AcRF) Tier 2 grant MOE2010-T2-2-082.}

\begin{abstract}
This paper concerns {\em randomized} leader election in synchronous distributed networks. A  distributed leader election algorithm  is presented for complete $n$-node networks that  runs in $O(1)$ rounds and (with high probability) uses only $O(\sqrt{n}\log^{3/2} n)$  messages to elect a unique leader (with high probability). 
When considering the ``explicit'' variant of leader election where eventually every node knows the identity of the leader, our algorithm yields the asymptotically optimal bounds of $O(1)$ rounds and $O(n)$ messages.
This algorithm is then extended to one solving leader election on any connected non-bipartite $n$-node graph $G$ in $O(\tau(G))$ time and $O(\tau(G)\sqrt{n}\log^{3/2} n)$ messages, where $\tau(G)$ is the mixing time of a random walk on $G$. The above result implies highly efficient (sublinear running time and messages)
leader election algorithms for networks with small mixing times,
such as expanders and hypercubes.
In contrast, previous  leader election algorithms had 
at least linear  message complexity
even in complete graphs.
Moreover, super-linear message lower bounds are known for time-efficient {\em deterministic} leader election algorithms.
 Finally, we present an almost matching lower bound for randomized leader election, showing that $\Omega(\sqrt n)$ messages are needed for  any leader election algorithm that
 succeeds with probability at least $1/e + \eps$, for any small constant $\eps > 0$. 
 We view our results as a step towards understanding the randomized complexity of
   leader election in distributed networks.
\end{abstract}
\maketitle

\section{Introduction}

\subsection{Background and motivation}
Leader election is a classical and  fundamental problem in distributed computing. It originated as the problem of regenerating the ``token'' in a local area \emph{token ring} network~\cite{Lann77-DistSystems} and has since then ``starred'' in major roles in problems across the spectrum, %
providing solutions for reliability by replication (or duplicate elimination), for locking, synchronization, load balancing, maintaining group memberships and establishing communication primitives. As an example, the content delivery network giant Akamai uses decentralized and distributed leader election as a subroutine to tolerate machine failure and build fault tolerance in its systems~\cite{Nygren-Akamai}. In many cases, especially with the advent of large scale networks such as peer-to-peer systems~\cite{Ratnasamy01CAN,Rowstron01Pastry,Zhao04Tapestry}, it is desirable to achieve low cost and scalable leader election, even though the guarantees may be probabilistic.

Informally, the problem of distributed leader election requires a group of processors in a distributed network to elect a unique leader among themselves, 
i.e., exactly one processor must output the decision that it is the leader, 
say, by changing a special \emph{status} component of its state to the value 
\emph{leader}~\cite{Lyn96}.
All the rest of the nodes must change their status component to the value \emph{non-leader}.  These nodes need not be aware of the identity of the leader. This {\em implicit} variant of leader election is rather standard (cf.
\cite{Lyn96}), and is sufficient in many applications, e.g., for token generation in a token ring environment. This  paper focuses  on  implicit leader election (but improves the upper bounds also for the explicit case, by presenting a time and message optimal randomized protocol).

In another variant,  all the non-leaders change their status component to the value \emph{non-leader}, and moreover, every node must also know the identity of the unique leader. This formulation may be necessary in problems where nodes coordinate and communicate through a leader, e.g., implementations of Paxos~\cite{Chandra-PaxosTalk-PODC07,Lamport-Paxos98}.
 In this variant, there is an obvious lower bound of $\Omega(n)$ messages (throughout, $n$ denotes the number of nodes in the network) since every node must be informed of the leader's identity. This {\em explicit} leader election
 can be achieved by simply executing an (implicit) leader election algorithm and then broadcasting the leader's identity using an additional $O(n)$ messages and $O(D)$ time (where $D$ is the diameter of the graph).
 
The complexity of the leader election problem and algorithms for it, 
especially deterministic algorithms (guaranteed to always succeed), 
have been well-studied. Various algorithms and lower bounds are known 
in different models with synchronous/asynchronous communication and 
in networks of varying topologies such as a cycle, a complete graph,
 or some arbitrary topology (e.g., see
\cite{KhanKMPT08,Lyn96,peleg-jpdc,santoro-book,GerardTelDistributedAlgosBook} 
and the references therein). The problem was first studied in context of a 
ring network by Le~Lann~\cite{Lann77-DistSystems} and discussed for 
general graphs in the influential paper of Gallager, Humblet, and 
Spira~\cite{GallagerHS1983}. However, leader election in the class of 
complete networks has come to occupy a special position of its own 
and has been extensively studied~\cite{AG1991:SICOMP,humblet-clique,KorachKuttenMoran-ModularLE-TOPLAS,KorachOptimalTrees87,KorachOptimal89,singh}; see also \cite{sense-of-dir,LouiMW88,singh97} for leader election in complete networks where nodes have a sense of direction.

The study of leader election algorithms is usually concerned with both message and time complexity. For complete graphs, Korach et al.~\cite{KorachPODC1984} and Humblet \cite{humblet-clique} presented $O(n \log n)$ message algorithms.
  Korach, Kutten, and Moran~\cite{KorachKuttenMoran-ModularLE-TOPLAS} developed a general method  decoupling the issue of the graph family from the design of the leader election algorithm, allowing the development of message efficient leader election algorithms for any class of graphs, given an efficient traversal algorithm for that class. When this method was applied to complete graphs, it yielded an improved (but still
  $\Omega(n \log n)$) message complexity. Afek and Gafni~\cite{AG1991:SICOMP} presented asynchronous and synchronous algorithms, as well as a tradeoff between the message and the time complexity of synchronous {\em deterministic} algorithms for complete graphs: the results varied from a $O(1)$-time, $O(n^2)$-messages  algorithm to a $O(\log n)$-time, $O(n\log n)$-messages algorithm.
Singh \cite{singh} showed another trade-off that saved on time, still for 
algorithms with a super-linear number of messages. (Sublinear time algorithms were shown in \cite{singh} even for $O(n \log n)$ messages algorithms, and even lower times for algorithms with higher messages complexities).
Afek and Gafni, as well as~\cite{KorachPODC1984,KorachOptimal89} showed a lower bound of $\Omega(n \log n)$ messages  for {\em deterministic} algorithms in the general case.
 One specific case where the message complexity could be reduced (but only as far as linear message complexity) was at the expense of also having a
  linear time complexity, see \cite{AG1991:SICOMP}.
   Multiple studies showed a different case where it was possible to reduce the number of messages to $O(n)$, by using a {\em sense of direction} - essentially, assuming some kind of a virtual ring, superimposed on the complete graph, such that the order of nodes on a ring is known to the nodes \cite{sense-of-dir}.
The above results demonstrate that the number of messages 
needed for deterministic leader election is at least linear or even 
super-linear (depending on the time complexity). 
In particular, existing $O(1)$ time deterministic algorithms require
$\Omega(n^2)$ messages (in a complete network).
At its core, leader election is a symmetry breaking problem. For anonymous networks under some reasonable assumptions, deterministic leader election was shown to be impossible~\cite{AngluinSTOC80} (using symmetry arguments). Randomization comes to the rescue in this case; random rank assignment is often used to assign unique identifiers, as done herein. Randomization also allows us to beat the lower bounds for deterministic algorithms, albeit at the risk of a small chance of error. 

A randomized leader election algorithm (for the explicit version) that could 
err with probability $O(1 / \log^{\Omega(1)}n )$ 
was presented in \cite{KrishnaRamanathan:randomized}
with time $O(\log n)$ and linear message complexity\footnote{In contrast, 
the probability of error in the current paper is $O(1 / n^{\Omega(1)})$.}. 
That paper also surveys some related papers about randomized algorithms in 
other models that use more messages for performing leader election
~\cite{Gupta-ProbLE-DISC} or related tasks (e.g., probabilistic quorum systems,
 Malkhi et al~\cite{Malkhi-ProbQSystems}).
    In the context of self-stabilization, a randomized
 algorithm with $O(n \log n)$ messages and $O(\log n)$ time until stabilization was presented in \cite{dmitry}.

\subsection{Our Main Results}
The main focus of this paper is on studying how randomization can help in 
improving the complexity of leader election,
especially message complexity in synchronous networks. 
We first present an (implicit) randomized leader election algorithm for a 
complete network that runs in $O(1)$ time and uses only $O(\sqrt{n}\log^{3/2}n)$
messages to elect a unique leader with high probability\footnote{Throughout, 
``with high probability (w.h.p)" means with probability at least 
$1 - 1/n^{\Omega(1)}$.}. 
This is a significant improvement over the linear number of messages that is 
needed for any deterministic algorithm. It is an even larger improvement over 
the super-linear number of messages needed for deterministic algorithms 
that have low time complexity (and especially compared to the $O(n^2)$ 
messages for deterministic $2$-round algorithms).
For the explicit variant of the problem, our algorithm implies an algorithm 
that uses (w.h.p.) $O(n)$ messages and $O(1)$ time, still a significant 
improvement over the $\Omega(n^2)$ messages used by deterministic algorithms.

We then extend this algorithm to solve leader election on any connected 
(non-bipartite\footnote{Our algorithm can be easily modified to work 
for bipartite graphs as well --- cf. Section 3.}) $n$-node graph $G$ 
in $O(\tau(G))$ time and $O(\tau(G)\sqrt{n}\log^{3/2} n)$ messages, 
where $\tau(G)$ is the mixing time of a random walk on $G$. 
The above result implies highly efficient (sublinear running time and messages)
leader election algorithms for networks with small mixing time.
In particular, for  important graph classes such as expanders (used, e.g., 
in modeling peer-to-peer networks \cite{soda12}), which have a logarithmic 
mixing time, it implies an algorithm of $O(\log n)$ time and 
$O(\sqrt{n}\log^{5/2} n)$ messages, and for hypercubes, which have 
a mixing time of $O(\log n \log \log n)$, it implies an algorithm of
$O(\log n \log \log n)$ time and $O(\sqrt{n}\log^{5/2} n \log \log n)$ messages.

For our algorithms, we assume that the communication is synchronous and follows the standard $\mathcal{CONGEST}$ model~\cite{Pel00}, where a node can send in each round at most one message of size $O(\log n)$ bits on a single edge. For our algorithm on general graphs,  we also assume that the nodes have an estimate of the network's size (i.e., $n$)
and the mixing time. We do not however assume that the nodes have unique IDs, hence the algorithms in this paper apply also for anonymous networks.  We assume that all nodes wake up simultaneously at the beginning of the execution. (Additional details on our distributed computation model are given later on.)

Finally we show that, in general, it is not possible to improve over 
our algorithm substantially, by presenting a lower bound for randomized 
leader election. We show that $\Omega(\sqrt n)$ messages are needed for any 
leader election algorithm in a complete network which succeeds with 
probability at least $1/e + \eps$ for any constant $\eps > 0$.
This lower bound holds even in the $\mathcal{LOCAL}$ model~\cite{Pel00}, 
where there is no restriction on the number of bits that can be sent 
on each edge in each round. 
To the best of our knowledge, this is the first non-trivial lower bound
for randomized leader election in complete networks.

\subsection{Technical Contributions}

The main algorithmic tool used by our randomized algorithm involves 
reducing the message complexity via random sampling.
For general graphs, this sampling is implemented by performing random 
walks.
Informally speaking, a small number of nodes (about $O(\log n)$), which 
are the candidates for leadership, initiate random walks.  We show that if 
sufficiently many random walks are initiated (about $\sqrt{n}\log 
n$), then there is a
good probability that  random walks originating from different candidates meet 
(or collide) at some node which acts as a referee. 
The referee notifies a winner among the colliding random walks.  The 
algorithms use a birthday paradox type argument to show that a unique 
candidate node wins all competitions (i.e., is elected) with high 
probability.
An interesting feature of that birthday paradox argument (for general graphs) 
is that it is applied to a setting with non-uniform selection probabilities. 
See Section 2 for a simple version of the  algorithm that works 
on a complete graph.
The algorithm of Section 3 is a generalization of the algorithm of 
Section 2 that works for any connected graph; however the algorithm and 
analysis are more involved.

The main intuition behind our lower bound proof for randomized leader election
is that, in some precise technical sense, any algorithm that sends fewer 
messages than required by our lower bound has a good chance of generating runs 
where there are multiple potential leader candidates in the network that 
do not influence each other.
In other words, the probability of such ``disjoint'' parts of the network 
to elect a leader is the same, which implies that there is a good 
probability that more than one leader is
elected. Although this is conceptually easy
to state, it is technically challenging to show formally since our result 
applies to all randomized algorithms without further restrictions.

%

%

 %

%
%
%
%
%
%
%

%

%
%
%
\subsection{Distributed Computing Model} \label{sec:model}
The model we consider is similar to the models of~\cite{AG1991:SICOMP,humblet-clique,KorachKuttenMoran-ModularLE-TOPLAS,KorachOptimalTrees87,KorachOptimal89},
with the main addition of giving processors access to a private unbiased coin. 
Also, we do not assume unique identities.
We consider a system of $n$ nodes, represented as an undirected (not
necessarily complete) graph $G=(V,E)$.
Each node %
runs an instance of a distributed algorithm.
The computation advances in synchronous rounds where, in every round, nodes
can send messages, receive messages that were sent in the same round by
neighbors in $G$,
and perform some local computation.
Every node has access to the outcome of unbiased private coin flips.
Messages are the only means of communication; in particular, nodes
cannot access the coin flips of other nodes, and do not share any memory.
Throughout this paper, we assume that all nodes are awake initially and 
simultaneously start executing the algorithm. 
\subsection{Leader Election}
We now formally define the leader election problem.
Every node $u$ has a special variable $\texttt{status}_u$ that it can set
to a value in $\{\bot, \nonelected, \elected \}$; initially we assume
$\texttt{status}_u = \bot$.
An \emph{algorithm $A$ solves leader election in $T$ rounds}
if, from round $T$ on, exactly one node has its status set to $\elected$ 
while all other nodes are in state $\nonelected$. This is the requirement 
for standard (implicit) leader election.  
%

%

\section{Randomized Leader Election in Complete Networks}
\label{sec: complete}

To provide the intuition for our general result, let us start by illustrating a
simpler version of our leader election algorithm, adapted to complete networks.
More specifically, this section presents an algorithm that, with high
probability, solves leader election in complete networks in $O(1)$ rounds
and sends no more than $O(\sqrt{n}\log^{3/2}n)$ messages.
Let us first briefly describe the main ideas of
Algorithm~\ref{alg:leaderComplete} (see pseudo-code below).
Initially, the algorithm attempts to reduce the number of leader candidates
as far as possible, while still guaranteeing that there is at least one
candidate (with high probability).
Non-candidate nodes enter the $\nonelected$ state immediately,
and thereafter only reply to messages initiated by other nodes.
Every node $u$ becomes a candidate with probability $2\log n / n$
and selects a random rank $r_u$ chosen from some large domain.
Each candidate node then randomly selects $2\lceil\sqrt{n\log n}\rceil$
other nodes as {\em referees} and informs all referees of its rank.
The referees compute the maximum (say $r_w$) of all received ranks, and
send a ``winner'' notification to the node $w$.
If a candidate wins all competitions, i.e., receives ``winner'' notifications
from all of its 
referees, it enters the $\elected$ state and becomes the leader.

\begin{algorithm}[h!]
\begin{algorithmic}[1]
\item[]
 \item[\bf Round 1:]
  \STATE Every node $u$ decides to become a candidate with probability
  $2\log n / n$
  and generates a random rank $r_u$ from $\set{1,\dots,n^4}$.
  \\
  If a node $u$ does not become a candidate, then it
  immediately enters the $\nonelected$ state;
  otherwise it executes the next step.
  \STATE \textbf{Choosing Referees:}
  Node $u$ samples $2\lceil\sqrt{n\log n}\rceil$ neighbors (the
  \emph{referees}) and sends a message $\msg{u,r_u}$ to each referee.

  \item[] 

  \item[\bf Round 2:]
  \STATE  \textbf{``Winner'' Notification:}
  A referee $v$ considers all received messages and sends
  a ``winner'' notification to the node $w$ of maximum rank, 
  namely, that satisfies 
  $r_w\ge r_u$ for every message $\msg{u,r_u}$.
    \STATE \textbf{Decision:} 
     If a node 
     receives ``winner'' notifications from all its referees, 
     then it enters the $\elected$ state; 
\\
     otherwise it sets its state to $\nonelected$.
\end{algorithmic}
\caption{Randomized Leader Election in Complete Networks}
\label{alg:leaderComplete}
\end{algorithm}

\begin{theorem} \label{thm:leaderComplete}
Consider a complete network of $n$ nodes
and assume the $\mathcal{CONGEST}$ model of communication.
With high probability, Algorithm~\ref{alg:leaderComplete} solves leader
election in $O(1)$ rounds, while using $O(\sqrt{n}\log^{3/2} n)$ messages.
\end{theorem}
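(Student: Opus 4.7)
The plan is to verify three properties of Algorithm~\ref{alg:leaderComplete}: the $O(1)$ round bound, the $O(\sqrt n\,\log^{3/2} n)$ message bound (w.h.p.), and the correctness of the election (w.h.p.). The round complexity is immediate from the pseudocode, which prescribes exactly two rounds. The remaining work splits into a message accounting and a correctness argument, both of which hinge on controlling the size of the candidate set and on a birthday-style collision bound for the referee samples.

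For the message complexity, let $C$ denote the random set of candidates. Each node independently joins $C$ with probability $2\log n/n$, so $\Exp{|C|}=2\log n$ and a standard Chernoff bound gives $|C|\le 6\log n$ with probability at least $1-1/n^{\Omega(1)}$. On that event, Round~1 contributes at most $|C|\cdot 2\lceil\sqrt{n\log n}\rceil = O(\sqrt n\,\log^{3/2} n)$ messages (the referee invitations), while Round~2 contributes at most one ``winner'' notification per referee, i.e.\ no more than $|C|\cdot 2\lceil\sqrt{n\log n}\rceil$ additional messages. Summing yields $O(\sqrt n\,\log^{3/2} n)$ messages w.h.p.

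For correctness I would establish three events simultaneously w.h.p., then deduce the conclusion. The events are: (E1) $|C|\ge 1$; (E2) the candidates' ranks are pairwise distinct; (E3) every pair of candidates has at least one common referee. For (E1), $\Prob{|C|=0}\le (1-2\log n/n)^n \le e^{-2\log n}=1/n^2$. For (E2), ranks are drawn uniformly from $\{1,\dots,n^4\}$, so by a union bound over the at most $\binom{n}{2}$ pairs of nodes, a collision occurs with probability at most $1/(2n^2)$. For (E3), fix two distinct potential candidates $u,v$ and let $k=2\lceil\sqrt{n\log n}\rceil$; conditioning on $u$'s referee set $R_u$, each independent draw in $v$'s referee set lies outside $R_u$ with probability at most $1-k/n$, so $\Prob{R_u\cap R_v=\emptyset}\le(1-k/n)^k\le e^{-k^2/n}\le e^{-4\log n}=n^{-4}$; a union bound over the $O(n^2)$ pairs makes (E3) fail with probability $O(1/n^2)$. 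Conditioning on (E1)--(E3), let $u^*\in C$ be the unique maximum-rank candidate; at each of $u^*$'s referees the incoming rank $r_{u^*}$ dominates all other candidate ranks received there, so every such referee sends its winner notification to $u^*$, and $u^*$ enters $\elected$. Any other candidate $v\neq u^*$ shares some referee $w$ with $u^*$ by (E3), and $w$ notifies $u^*$ rather than $v$, so $v$ enters $\nonelected$. All non-candidates entered $\nonelected$ in Round~1, so exactly one node is elected.

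I expect the principal technical step to be the pairwise-intersection bound in (E3); the sample size $2\lceil\sqrt{n\log n}\rceil$ is calibrated precisely so that $k^2/n\ge 4\log n$, yielding inverse-polynomial disjointness probability per pair and surviving the union bound over the $O(n^2)$ candidate pairs. Any significantly smaller $k$ would admit two disjoint referee sets with constant probability and thus two simultaneous leaders, while the chosen $k$ still keeps $|C|\cdot k = O(\sqrt n\,\log^{3/2} n)$; the tension between these two constraints is what fixes both the sample size and the final message bound.
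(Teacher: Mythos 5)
Your proposal is correct and follows essentially the same route as the paper's proof: a Chernoff bound to cap the candidate set at $O(\log n)$, the observation that referees only reply to candidates that contacted them, and a birthday-paradox bound showing every pair of candidates shares a referee w.h.p., so that only the maximum-rank candidate survives. The only difference is presentational --- you make explicit the rank-distinctness event and take the referee-intersection union bound over all $O(n^2)$ pairs rather than only over pairs involving the top-ranked candidate, which the paper treats more tersely but identically in substance.
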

\begin{proof}%
  Since all nodes enter either the $\elected$ or $\nonelected$
  state after two rounds at the latest, the runtime bound of $O(1)$
  holds trivially.

  We now argue the message complexity bound.
  On expectation, there are $2\log n$ candidate nodes.
  By using a standard Chernoff bound (cf.\ Theorem~4.4 in \cite{upfal}),
  there are at most $7\log n$ candidate nodes with probability at least
  $1-n^{-2}$.
  In step 3 of the algorithm, each referee only sends
  messages to the candidate nodes which contacted it.
  Since there are $O(\log n)$ candidates and each approaches  
  $2\lceil\sqrt{n\log n}\rceil$ referees, the total number of messages sent
  is bounded by $O(\sqrt{n}\log^{3/2} n)$ with high probability.

  Finally, we show that Algorithm 1    %
  solves leader election with high probability.
  The probability that no node elects itself as leader is
  $$\left(1 - \frac{2\lceil\log n\rceil}{n}\right)^n \approx 
    \exp(-2\log n) = n^{-2}.$$
  Hence the probability that at least
  one node is elected as leader is at least $1-n^{-2}$.
  Let $\ell$ be the node that generates the highest random rank $r_\ell$
  among all candidate nodes; with high probability, $\ell$ is unique.
  Clearly, node $\ell$ enters the $\elected$ state, since it
  receives ``winner'' notifications from all its referees.

  Now consider some other candidate node $v$. This candidate
  chooses its referees randomly among all nodes.
  Therefore, the probability that an individual referee selected by $v$
  is among the referees chosen by $\ell$, is
  $2\lceil\sqrt{n \log n}\rceil/n$.
  It follows that the probability that $\ell$ and $v$ do not choose any
  common referee node is asymptotically at most
  $$  \left( 1 - 2\sqrt{\frac{\log n}{n}}\right)^{2\sqrt{n\log n}} \le
  \exp\left(-4\log n\right) = n^{-4},  $$
  which means that with high probability, some node $x$ serves as
  common referee to $\ell$ and $v$.
  By assumption, we have $r_v < r_\ell$, which means that node $v$ does
  not receive $2\lceil\sqrt{n\log n}\rceil$ ``winner'' notifications,
  and thus it subsequently enters the $\nonelected$ state.
  By taking a union bound over all candidate nodes other than $\ell$,
  it follows that with probability at least $1-1/n$,
  no other node except $\ell$ wins all of its competitions,
  and therefore, node $\ell$ is the only node to become a leader.
\end{proof}

\section{Randomized Leader Election in General Graphs}
\label{sec: general}
In this section, we present our main algorithm, which elects a unique 
leader in $O(\tau)$ rounds (w.h.p.), while using 
$O(\tau(G,n)\sqrt{n}\log^{3/2} n)$ messages (w.h.p.), where $\tau(G,n)$ is 
the mixing time of a random walk on $G$ (formally defined later on, 
in Eq. \eqref{eq:mixing}).
Initially, any node $u$ only knows the mixing time (or a constant factor 
estimate of) $\tau(G,n)$; in particular $u$ does
not have any a priori knowledge about the actual topology of $G$.

The algorithm presented here requires nodes to perform random walks on the
network by token forwarding in order to choose sufficiently many referee
nodes at random. Thus essentially random walks perform the role of sampling
as done in Algorithm \ref{alg:leaderComplete} and is conceptually similar.
Whereas in the complete graph randomly chosen nodes act as referees,
here any intermediate node (in the random walk) that sees tokens
from two competing candidates can act as a referee and notify the winner.
One slight complication we have to deal with in the general setting is that 
in the $\mathcal{CONGEST}$ model it is impossible 
to perform too many walks in parallel along an edge. 
We solve this issue by sending only the {\em count} of tokens that need 
to be sent by a particular candidate, and not the tokens themselves.

While using random walks can be viewed as a generalization of the sampling
performed in Algorithm~\ref{alg:leaderComplete}, showing that two
candidate nodes intersect in at least one referee 
leads to an interesting balls-into-bins scenario where balls 
(i.e., random walks) have a \emph{non-uniform} probability to be placed 
in some bin (i.e., reach a referee node).
This non-uniformity of the random walk distribution stems from the fact
that $G$ might not be a regular graph.
We show that the non-uniform case does not worsen the
probability of two candidates reaching a common referee, and hence an analysis
similar to the one given for complete graphs goes through.

We now introduce some basic notation for random walks.
Suppose that $V=\{u_1,\dots,u_n\}$ and let $d_i$ denote the degree of node
$i$.
The $n\times n$ \emph{transition matrix} $\mathbf{A}$ of $G$ has entries
$a_{i,j}=1/d_i$ 
if there is an edge $(i,j) \in E$, otherwise
$a_{i,j}=0$.
The entry $a_{i,j}$ gives the probability that a random walk moves from node
$u_i$ to node $u_j$.
The position of a random walk after $k$ steps is represented by a
probability distribution $\pi_k$ determined by $\mathbf{A}$.
If some node $u_i$ starts a random walk, the initial distribution $\pi_0$ of
the walk is an $n$-dimensional vector having all zeros except at index $i$
where it is $1$.
Once the node $u$ has chosen a random neighbor to forward the token, the
distribution of the walk after $1$ step is given by $\pi_1 = \mathbf{A}\pi_0$
and in general we have $\pi_k = \mathbf{A}^k \pi_0$.
If $G$ is non-bipartite and connected, then the distribution of the walk will
eventually converge to the \emph{stationary distribution}
$\pi_*=(b_1,\dots,b_n)$, which has entries $b_i = d_i / (2|E|)$
and satisfies $\pi_* = \mathbf{A}\pi_*$.

We define the \emph{mixing time} $\tau(G,n)$ of a graph G with $n$ nodes
as the minimum $k$ such that, for all starting distributions $\pi_0$,
\begin{equation} \label{eq:mixing}
  |\!| \mathbf{A}\pi_k - \pi_* |\!|_{\infty} \le \frac{1}{2n},
\end{equation}
where $|\!|\cdot|\!|_{\infty}$ denotes the usual maximum norm on a vector.
Clearly, if $G$ is a complete network, then
$\tau(G,n)=1$.
For expander graphs  it is well known that $\tau(G,n) \in O(\log n)$.
Note that mixing time is well-defined only for non-bipartite graphs;
however, by using a lazy random walk strategy (i.e., with probability $1/2$
stay in the current node; otherwise proceed as usual) our algorithm
will work for bipartite graphs as well.

\begin{algorithm}[h!]
\begin{algorithmic}[1]
  \item[]
  \item[\textbf{Variables and Initialization:}]
  \STATE \textbf{VAR} $\texttt{origin} \leftarrow 0$; $\texttt{winner-so-far} \leftarrow \bot$
  \STATE Node $u$ decides to become a candidate with probability
  $2\log n / n$ 
  and generates a random rank $r_u$ from $\set{1,\dots,n^4}$.
  \item[]
  \item[\textbf{Initiating Random Walks:}]
  \STATE Node $u$ creates $2\lceil\sqrt{n\log n}\rceil$ tokens of type 
  $\msg{r_u,k}$.
  \STATE Node $u$ starts $2\lceil\sqrt{n\log n}\rceil$ random walks 
  (called \emph{competitions}), each of
  which is represented by the random walk token
  $\msg{r_u,k}$ (of
  $O(\log n)$ bits) where $r_u$ represents $u$'s random rank.
  The counter $k$ is the number (initially 1) of walks
  that are represented by this token (explained in
  Line~\ref{line:counter}).

  \item[]
  \item[\textbf{Disqualifying low-rank candidates:}] (note that any 
   intermediate node along the random walk can act as a referee and 
   disqualify the token of a low-rank candidate)
  \STATE \label{line:disqualify} A node $v$ discards every received token
  $\msg{r_u,k}$
  if $v$ has received (possibly in the same round) a token $r_w$
    with $r_w>r_u$.
  \IF{a received token $\msg{r_w,k'}$ is not discarded and
  $\texttt{winner-so-far} \neq r_w$}
  \STATE Node $v$ remembers the port of an arbitrarily chosen neighbor that sent
  one of the (possibly merged) tokens containing $r_w$ in variable $\texttt{origin}$
  and sets its variable $\texttt{winner-so-far}$ to $r_w$. \label{line:origin}
  \ENDIF
  \item[]
  \item[\textbf{Token Forwarding:}]
  \STATE Let $\mu=\msg{r_u,k}$ be a token received by $v$ and
  suppose that $\mu$ is not discarded in Line~\ref{line:disqualify}.
  For simplicity, we consider all distinct tokens that arrive in the
  current round containing the same value $r_u$ at $v$ to be merged into a
  single token $\msg{r_u,k}$ before processing where $k$ holds the accumulated
  count.
  Node $v$ randomly samples $k$ times from its neighbors.
  \label{line:counter}
  If a neighbor $x$ was chosen $k_x\le k$ times, $v$ sends a token
  $\msg{r_u,k_x}$ to $x$.

  \item[]
  \item[\textbf{Notifying a winner in round $\tau(G,n)$:}]
  \IF{$\texttt{winner-so-far} \neq \bot$}
  \STATE Suppose that node $v$ has not discarded some token generated by
  a node $w$.
  According to Line~\ref{line:disqualify}, $w$ has generated the largest
  rank among all tokens seen by $v$. %
  \STATE Node $v$ generates a ``winner'' notification 
   $\msg{\textsc{WIN},r_w,cnt}$
  for $r_w$ and sends it to the neighbor stored in $\texttt{origin}$ (cf.\
  Line~\ref{line:origin}). The field $cnt$ is set to $1$ by $v$ and contains the
  number of ``winner'' notifications represented by this token.
  \ENDIF
  \STATE If a node $u$ receives (possibly) multiple ``winner'' notifications 
    for $r_w$, it simply
  forwards  a token $\msg{\textsc{WIN},r_w,cnt'}$ to the neighbor stored in
  $\texttt{origin}$ where $cnt'$ is the accumulated count of all received tokens.

  \item[]
  \item[\textbf{Decision:}]
  \STATE If a node wins all competitions, i.e., receives $2\lceil\sqrt{n\log 
    n}\rceil$
  ``winner'' notifications it enters the $\elected$ state;
  otherwise it sets its state to $\nonelected$.

\end{algorithmic}
\caption{Randomized Leader Election in General Networks}
\label{alg:leader}
\end{algorithm}

\begin{theorem} \label{thm:leader}
Consider a non-bipartite network $G$ of $n$ nodes with mixing time
$\tau(G,n)$, and assume the $\mathcal{CONGEST}$ model of communication.
With high probability, Algorithm~\ref{alg:leader} solves leader election
within $O(\tau(G,n))$ rounds, while using $O(\tau(G,n)\sqrt{n}\log^{3/2}
n)$ messages.
\end{theorem}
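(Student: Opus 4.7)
The proof follows the template of Theorem~\ref{thm:leaderComplete}, with random walks replacing direct sampling. The runtime bound is immediate: the walk phase lasts $\tau(G,n)$ rounds, and each winner token then retraces the path of a walk back to its originator in another $\tau(G,n)$ rounds, so the whole algorithm finishes in $O(\tau(G,n))$ rounds.

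For the message complexity, a Chernoff bound as in the proof of Theorem~\ref{thm:leaderComplete} shows that with probability $1-n^{-\Omega(1)}$ there are at most $7\log n$ candidates. The key observation is that the token-merging rule of Line~\ref{line:counter} keeps the number of distinct messages per rank per round small: at any round, the walks of a given rank $r_u$ occupy some multiset of nodes whose total count is $2\lceil\sqrt{n\log n}\rceil$, and each node hosting $m$ of those walks sends at most $m$ outgoing messages (every walk is forwarded along exactly one edge, and the counter $k_x$ absorbs walks forwarded to the same neighbor $x$); therefore each rank accounts for at most $2\lceil\sqrt{n\log n}\rceil$ messages per round. Multiplying by the $O(\log n)$ candidate ranks and the $O(\tau(G,n))$ rounds of the walk phase yields the target $O(\tau(G,n)\sqrt{n}\log^{3/2} n)$ bound; a symmetric accounting (with $cnt$ playing the role of the counter) handles the retracing of winner notifications.

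For correctness, the opening is identical to the complete-graph case: at least one candidate is created w.h.p., and the candidate $\ell$ with the maximum rank is unique w.h.p. Since no higher rank exists, $\ell$'s tokens are never discarded by Line~\ref{line:disqualify}, so $\ell$ collects its full quota of winner notifications and enters the $\elected$ state. For any other candidate $v$, it suffices to show that at least one node is visited both by a walk of $\ell$ and by a walk of $v$: such a node sees $r_\ell > r_v$, discards $v$'s token, and routes its winner notification toward $\ell$ rather than toward $v$, leaving $v$ at least one notification short of its quota. A union bound over the $O(\log n)$ losing candidates completes the election argument.

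The main obstacle is therefore proving this birthday-style ``common referee'' claim in the non-regular setting, where, after $\tau(G,n)$ steps, each walk's position lies within $\|\cdot\|_\infty$-distance $1/(2n)$ of the non-uniform stationary distribution $\pi_*=(d_i/(2|E|))_i$. My plan is to model the endpoints of $\ell$'s and $v$'s walks as two independent batches of $k=2\lceil\sqrt{n\log n}\rceil$ balls drawn i.i.d.\ from $\pi_*$ (the additive $1/(2n)$ mixing-error can be absorbed into constants). Conditioning on the random multiset $T$ of $\ell$'s endpoints, the probability that none of $v$'s $k$ balls hits $T$ equals $(1-\pi_*(T))^k$, so the no-collision probability equals $\mathbb{E}[(1-\pi_*(T))^k]$. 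I then expect to show, via a convexity or majorization argument, that this expectation is maximized when $\pi_*$ is uniform, which is consistent with the inequality $\sum_i \pi_*(i)^2 \geq 1/n$ (Cauchy--Schwarz), attained precisely at the uniform distribution; in short, non-uniformity only helps. Reducing to the uniform case, $|T|$ concentrates around $k$, so $\pi_*(T)\geq k/(2n)$ w.h.p.\ and $(1-\pi_*(T))^k \leq e^{-k^2/(2n)} = n^{-2}$, which survives the final union bound. The technical crux will lie in cleanly justifying the majorization step; the rest is standard balls-into-bins concentration.
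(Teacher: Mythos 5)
Your skeleton --- $O(\tau(G,n))$ runtime, the $7\log n$ candidate bound, per-round/per-rank message accounting via the merged counters, the max-rank candidate $\ell$ surviving all disqualifications, and a union bound over losing candidates each of which must share a referee with $\ell$ --- matches the paper's proof, and your message bookkeeping is if anything more explicit than the paper's. The genuine gap is that the single ingredient that distinguishes this theorem from Theorem~\ref{thm:leaderComplete}, namely that the \emph{non-uniform} endpoint distribution of the walks cannot increase the probability of ``no common referee,'' is exactly the step you defer (``I then expect to show, via a convexity or majorization argument\ldots''; ``the technical crux will lie in cleanly justifying the majorization step''). This is not routine: the quantity $\Exp{(1-\pi_*(T))^k}$ involves the random occupied set $T$, whose law depends on $\pi_*$ in a way that makes the occupancy events at different bins dependent, and the Cauchy--Schwarz fact $\sum_i p_i^2 \ge 1/n$ that you invoke says nothing by itself about that expectation. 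The paper isolates precisely this claim as Lemma~\ref{lem:poisson} and handles the dependence by passing to the Poisson approximation (Corollary 5.11 of \cite{upfal}), under which the per-bin collision indicators $X_i$ become independent, so that $\Prob{\bigcap_{i=1}^n (X_i=0)}$ factors as $\prod_i\bigl(1-(1-(1-p_i)^\rho)^2\bigr)$ and is then bounded by $\exp\bigl(-\rho^2\sum_i p_i^2\bigr)$, with the extremal (uniform) case identified by Lagrangian optimization. Your proposal would need either to carry out the Schur-concavity argument for the exact, dependent allocation, or to introduce an independence-restoring device such as Poissonization; as written, ``the rest is standard balls-into-bins concentration'' understates where the entire difficulty of the theorem lives.

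A secondary issue: the claim that the additive $1/(2n)$ mixing error ``can be absorbed into constants'' is only immediate \emph{after} reducing to the uniform case, where it costs a factor of $2$ against the entry $1/n$ (this is how the paper uses Eq.~\eqref{eq:mixing}). Before that reduction, entries of $\pi_*$ can be far smaller than $1/(2n)$ --- a degree-one node has stationary mass $1/(2|E|)$ --- so the error is not lower-order relative to $p_i$. The clean fix is to apply the worst-case-distribution lemma directly to the actual round-$\tau(G,n)$ distribution of the walks (which is still a probability vector) rather than to $\pi_*$ itself.
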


\begin{proof}%
  We first argue the message complexity bound.
  As argued in the proof of Thm. \ref{thm:leaderComplete},
  there are at most $7\log n$ candidate nodes with probability at least
  $1-n^{-2}$.
  Every candidate node $u$ 
  creates $\Theta(\sqrt{n\log n})$ tokens
  and initiates a random walk of length $\tau(G,n)$, 
  for each of the $\Theta(\sqrt{n\log n})$
  tokens.
  By the description of the algorithm, 
  there are $O(\sqrt{n}\log^{3/2} n)$ random walks of length $O(\tau(G,n))$.
  In addition, at most one notification message is sent at the last step 
  of each random walk, and it travels a distance of at most $O(\tau(G,n))$.
  Hence the total number of messages sent throughout the execution is
  bounded by $O(\tau(G,n)\sqrt{n}\log^{3/2} n)$ with high probability.

  The running time bound depends on the time that it takes to complete the
  $2\lceil\sqrt{n\log n}\rceil$ random walks in parallel and the notification 
  of the winner.
  By Line~\ref{line:disqualify}, it follows that a node only forwards
  at most one token to any neighbor in a round, thus there is no delay due
  to congestion.
  Moreover, for notifying the winner, nodes forward the ``winner'' notification
  for winner $w$ to the neighbor stored in $\texttt{origin}$.
  According to Line~\ref{line:origin}, a node sets $\texttt{origin}$ to a
  neighbor from which it has received the first token originated from $w$.
  Thus there can be no loops when forwarding the ``winner'' notifications,
  which reach the winner $w$ in at most $\tau(G,n)$ rounds.

  We now argue that Algorithm~\ref{alg:leader} solves leader election
  with high probability.
  Similarly to Algorithm~\ref{alg:leaderComplete}, it follows that there
  will be at least one leader with high probability.

  Now consider some other candidate node $v$.
  Recall that we have that $r_v < r_\ell$ by assumption.
  By the description of the algorithm, node $v$ chooses its referees by
  performing $\rho = 2\lceil\sqrt{n\log n}\rceil$ random walks of length 
  $\tau(G,n)$.
  We cannot argue the same way as in the proof of
  Algorithm~\ref{alg:leaderComplete}, since in general, the stationary
  distribution of $G$ might not be the uniform distribution vector
  $(1/n,\dots,1/n)$.
  Let $p_i$ be the $i$-th entry of the stationary distribution.
  Let $X_i$ be the indicator random variable that is $1$ if there is a
  collision (of random walks) at referee node $i$.
  We have
  $$
  \Prob{ X_i = 1 } = (1 - (1-p_i)^\rho )^2.
  $$
  We want to show that the probability of error (i.e., having no collisions)
  is small;
  in other words, we want to upper bound $\Prob{ \bigcap_{i=1}^n (X_i = 0)}$.
  The following Lemma shows that
  $\Prob{\bigcap_{i=1}^n (X_i = 0) }$ is maximized for the uniform
  distribution.
\begin{lemma} \label{lem:poisson}
  Consider $\rho$ balls that are placed into $n$ bins according to some
  probability distribution $\pi$ and let $p_i$ be the $i$-th entry of
  $\pi$.
  Let $X_i$ be the indicator random variable that is $1$ if there is a
  collision (of random walks) at referee node $i$.
  Then $\Prob{ \bigcap_{i=1}^n (X_i = 0) }$ is maximized for the uniform
  distribution.
\end{lemma}
\begin{proof}
  By definition, we have
  $\Prob{ X_i = 1 } = (1 - (1-p_i)^\rho )^2.$
  Note that the events $X_i=1$ and $X_j=1$ are not necessarily
  independent.
  A common technique to treat dependencies in balls-into-bins scenarios is
  the Poisson approximation where we consider the number of balls in each bin
  to be independent Poisson random variables with mean $\rho/n$.
  This means we can apply Corollary 5.11 of \cite{upfal}, which states
  that if some event $E$ occurs with probability $p$ in the Poisson case,
  it occurs with probability at most $2p$ in the exact case, i.e., we only
  lose a constant factor by using the Poisson approximation.
  A precondition for applying Corollary 5.11, is that the probability for
  event $E$ monotonically decreases (or increases) in the number of balls,
  which is clearly the case when counting the number of collisions of
  balls.
  Considering the Poisson case, we get
  \begin{align*}
    \Prob{ \bigcap_{i=1}^n (X_i=0) }
        &= \prod_{i=1}^n \Prob{ X_i = 0 }
        = \prod_{i=1}^n\left( 1 - (1 - (1 - p_i)^\rho)^2\right) \\
        &\leq \prod_{i=1}^n \left(1 - (1 - e^{-p_i \rho})^2 \right)
        \leq \prod_{i=1}^n \left(1 - (p_i \rho)^2 \right) 
        \leq \prod_{i=1}^n e^{-p_i^2 \rho^2}
        = \exp\left(-\rho^2 \sum_{i=1}^n p_i^2\right).
  \end{align*}
  To maximize $\Prob{ \bigcap_{i=1}^n (X_i=0) }$, it is thus sufficient to
  minimize $\sum_{i=1}^n p_i^2$ under the constraint $\sum_{i=1}^n p_i =
  1$.
  Using Lagrangian optimization it follows that this is minimized for the
  uniform distribution, which completes the proof of Lemma~\ref{lem:poisson}.
\end{proof}
  By \eqref{eq:mixing}, the probability of such a walk hitting any of the
  referees chosen by $\ell$, is at least $2\sqrt{n \log n} / (2n)$.
  It follows that the probability that $\ell$ and $v$ do not choose a
  common referee node is asymptotically at most
  \[
  \left( 1 - \sqrt{\frac{\log n}{n}}\right)^{2\sqrt{n\log n}} \le
  \exp\left(-2\log n\right).
  \]
  Therefore, the event that node $v$ does not receive sufficiently many
  ``winner'' notifications, happens with probability $\ge 1-n^{-2}$, which
  requires $v$ to enter the $\nonelected$ state.
  By taking a union bound over all other candidate nodes, it follows that
  with high probability no other node except $\ell$ will win all of its
  competitions, and therefore, node $\ell$ is the only node to become a leader
  with probability at least $1-1/n$.
\end{proof}

\section{Lower Bound}
\label{sec: lower}

In this section, we prove a lower bound on the number of messages required
by any algorithm that solves leader election with probability at least
$1/e + \eps$, for any constant $\eps >0$.

Our model
assumes that all processors execute the same algorithm and have access 
to an unbiased private coin.
So far we have assumed that nodes are \emph{not} equipped with unique ids.
Nevertheless, our lower bound still holds even if the nodes start with
unique ids.

Our lower bound applies to all algorithms that send only $o(\sqrt{n})$
messages with probability at least $1-1/n$.
In other words, the result still holds for algorithms that have small but
nonzero probability for producing runs where the number of messages sent
is much larger (e.g., $\Omega({n})$).
We show the result for the $\mathcal{LOCAL}$ model, which implies the
same for the $\mathcal{CONGEST}$ model.

\begin{theorem} \label{thm:lb}
Consider any algorithm $A$ that sends at most $f(n)$ messages (of
arbitrary size) with high probability on a complete network of $n$ nodes.
If $A$ solves leader election with probability at least $1/e + \eps$, for any
constant $\eps > 0$, then $f(n) \in \Omega(\sqrt{n})$.
This holds even if nodes are equipped with unique identifiers (chosen by
an adversary).
\end{theorem}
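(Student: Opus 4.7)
The plan is a symmetry-plus-independence lower bound: when $f(n) = o(\sqrt n)$ messages suffice with high probability, the ``silent'' nodes of a typical run behave like independent Bernoulli trials, and the classical bound $\sup_p m p(1-p)^{m-1} = 1/e$ then caps the probability of electing exactly one of them at $1/e + o(1)$, contradicting any hypothesized success probability $1/e + \eps$. I would prove the statement in the $\mathcal{LOCAL}$ model (strictly stronger than $\mathcal{CONGEST}$), condition throughout on the high-probability event $N \le f(n)$ (losing only $O(1/n)$), and in the identified case reduce to the anonymous/symmetric case by averaging over a uniformly random assignment of the adversary-chosen identifier multiset to the $n$ physical positions: if $A$ succeeds on every id assignment, then it succeeds on the random one, at which point symmetry is restored.

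Call a node \emph{silent} in a run if it neither sends nor receives any message. Since every message accounts for two endpoints, any run with $N \le f(n)$ leaves at least $n - 2f(n) = n - o(\sqrt n)$ silent nodes. The final status of a silent node is a function of its private coin flips alone, and no two silent nodes influence one another. By the symmetry coming from all nodes executing the same algorithm, each silent node has, in the anonymous case, the same marginal probability $q$ of electing itself; and conditional on the set $S$ of silent nodes, those decisions are mutually independent. The inequality
\[
  |S|\, q\,(1-q)^{|S|-1} \;\le\; \tfrac{1}{e}
\]
then upper-bounds the probability that exactly one silent node elects itself by $1/e + o(1)$.

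To bound the probability that the unique leader is instead \emph{non-silent}, I would decompose the run into its \emph{influence components}---the connected components of the message graph whose vertices are the nodes and whose edges are the messages actually sent. Since at most $f(n) = o(\sqrt n)$ edges appear, the run has at least $n - f(n)$ influence components, of which $\ge n - 2f(n)$ are silent singletons. Because the components act mutually independently, the probability of electing exactly one leader globally decomposes into a sum of per-component contributions. Pairing each non-singleton contribution with the factor $(1-q)^{|S|}$ (the probability that no silent node elects itself), and exploiting the internal symmetry inside each non-singleton component (each of whose nodes is equally likely to be that component's ``internal'' winner), I would argue that the non-silent contribution is dominated by the silent one, giving the total bound $\Pr[\text{exactly one leader}] \le 1/e + o(1) < 1/e + \eps$ for all sufficiently large $n$.

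The main obstacle will be the delicate conditioning. The event ``$v$ is silent'' is a global random event---it depends on whether any other node chose to message $v$---so naive conditioning can re-weight $v$'s own coin distribution. The plan is to circumvent this by partitioning the probability space according to the full transcript of (sender, receiver, round) triples of messages sent during the execution, noting that conditional on such a transcript the coins of the silent nodes are independent of everything else, and then using the invariance of the algorithm under permutations of node labels that fix the transcript to extract the required symmetric product structure. Pushing this through in concert with the influence-component accounting, and sandwiching it between the random-relabeling trick that handles adversarial identifiers, is the most technical ingredient of the proof.
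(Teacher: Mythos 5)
Your high-level engine --- symmetry, independence across message-disjoint parts of the run, and the bound $\sup_p\, m p(1-p)^{m-1} = 1/e$ --- is exactly the paper's, and your random-relabelling reduction for adversarial identifiers matches the paper's reduction in spirit. But there is a genuine gap in how you handle the non-silent part, and it sits precisely where the paper's main technical work lives. You split ``exactly one leader'' into a silent contribution and a non-silent contribution and propose to show the latter is \emph{dominated} by the former. That cannot work: consider the algorithm in which each node becomes a candidate with probability $1/n$, sends one message to a random neighbor, and elects itself. Silent nodes never become leaders, so your $q=0$, the silent contribution vanishes, and the entire success probability $\approx 1/e$ lives in the non-singleton components. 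More fundamentally, the inequality $\Prob{\sum_i X_i = 1} \le 1/e$ for independent Bernoulli variables requires the parameters to be (essentially) equal; with heterogeneous parameters the probability of exactly one success can be as large as $1$ (one near-deterministic trial, all others near zero). So a decomposition into silent singletons with parameter $q$ and message-connected components with some other parameter does not yield the $1/e$ cap. What is actually needed --- and what the paper proves as its key Lemma~\ref{lem:clouds} --- is that \emph{every} influence component, silent singleton or multi-node cloud alike, has the \emph{same} probability of producing a leader. This is established by a round-by-round induction showing that, conditioned on the clouds remaining disjoint, any sequence of cloud configurations (the ordered tuple of member states at each round) is equally likely to be realized by any two initiators. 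Your appeal to ``internal symmetry inside each non-singleton component'' addresses only which node \emph{within} a component wins, not the inter-component equi-probability the argument actually requires.

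A secondary issue: your transcript-conditioning device for the silent nodes must contend with the fact that conditioning on ``$v$ sent no message'' re-weights $v$'s own coin flips (silence is itself an outcome of $v$'s coins), not only the coins of potential senders to $v$. This is survivable because the re-weighting is identical for all silent nodes, but it means the clean ``i.i.d.\ Bernoulli($q$)'' picture does not come for free from fixing the transcript; it needs the same kind of conditional coupling that the paper's cloud-configuration induction supplies uniformly for all components. Once that lemma is in place, the two-class bookkeeping becomes unnecessary: one counts all disjoint clouds, applies $c p (1-p)^{c-1} \le 1/e$ with a single common $p$, and concludes.
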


Note that Theorem~\ref{thm:lb} is essentially tight with respect to 
the number of messages \emph{and} the probability of successfully electing 
a leader. To see this, 
first observe that our Algorithm~\ref{alg:leaderComplete} can be modified
such that each node becomes a candidate with probability $c/n$, for some 
constant $c>0$, and where each candidate only contacts $\Theta(\sqrt{n})$ 
referee nodes. This yields a message complexity of $O(\sqrt{n})$ and success 
with (large) constant probability.
Furthermore, consider the naive randomized algorithm where each node initially 
chooses to become leader with probability $1/n$ and then terminates.
This algorithm succeeds with probability 
${n \choose 1}(1/n)(1-1/n)^{n-1} \approx 1/e$ without sending any messages 
at all, which demonstrates that there has to be a sudden ``jump'' 
in the required message complexity when breaking the $1/e$ barrier 
in success probability.

The rest of this section is dedicated to proving Theorem~\ref{thm:lb}.
We first show the result for the case where nodes are anonymous, i.e.,
are \emph{not} equipped with unique identifiers,
and later on extend the impossibility result to
the non-anonymous case by an easy reduction.

Assume that there exists some algorithm $A$ that solves leader election with 
probability at least $1/e  + \eps$ but sends only $f(n) \in o(\sqrt{n})$ 
messages.
The remainder of the proof involves showing that this yields a contradiction.
Consider a complete network where for every node, the adversary chooses 
the connections of its ports as a random permutation on $\{1,\dots,n-1\}$.

For a given run $\alpha$ of an algorithm, define the \emph{communication graph}
$\mathcal{C}^r(\alpha)$ to be a directed graph on the
given set of $n$ nodes where there is an edge from $u$ to $v$ if and only
if $u$ sends a message to $v$ in some round $r'\le r$ of the run $\alpha$.
For any node $u$, denote the \emph{state of $u$ in round $r$ of the run 
$\alpha$} by $\sigma_r(u,\alpha)$.
Let $\Sigma$ be the set of all node states possible in algorithm $A$.
(When $\alpha$ is known, we may simply write $\mathcal{C}^r$ and 
$\sigma_r(u)$.)
With each node $u \in \mathcal{C}^r$, associate its state $\sigma_r(u)$ in 
$\mathcal{C}^r$, the communication graph of round $r$.
We say that node \emph{$u$ influences node $w$ by round $r$} if there is a
directed path from $u$ to $w$ in $\mathcal{C}^r$.
(Our notion of influence is more general than the {causality} based 
``happens-before'' relation of \cite{Lam78}, since a directed path from 
$u$ to $w$ is necessary but not sufficient for $w$ to be causally 
influenced by $u$.)
A node $u$ is an \emph{initiator} if it is not influenced before sending 
its first message.
That is, if $u$ sends its first message in round $r$, then $u$ has an outgoing edge in $\mathcal{C}^{r}$ and is an isolated vertex in $\mathcal{C}^{1},\dots,\mathcal{C}^{r-1}$.
For every initiator $u$, we define the \emph{influence cloud} ${\cal 
  IC}^r_u$
as the pair ${\cal IC}^r_u = (C^r_u, S^r_u)$, where
$C^r_u = \langle u, w_1,\ldots,w_k\rangle$ is the ordered set of all nodes 
that are influenced by $u$, namely, that are reachable along a directed path 
in $\mathcal{C}^r$ from $u$. ordered by the time by which they joined\footnote{We say that a node $v$ \emph{joins the cloud of $u$ in $r$} if $v \notin {C}_u^{r-1}$ and $v \in {C}_u^r$.} the cloud (breaking ties arbitrarily),
and $S^r_u = \langle
\sigma_r(u,\alpha), \sigma_r(w_1,\alpha),\ldots, \sigma_r(w_k,\alpha) \rangle$
is their configuration after round $r$, namely, their current tuple of states.
(In what follows, we sometimes abuse notation by referring to the ordered 
node set $C^r_u$ as the influence cloud of $u$.)
Note that a \emph{passive} (non-initiator) node $v$ does not send any messages
before receiving the first message from some other node.

Since we are only interested in algorithms that send a finite number of
messages, in every execution $\alpha$ there is some round
$\rho = \rho(\alpha)$ by which no more messages are sent.

In general, it is possible that in a given execution, two influence clouds
$C^r_{u_1}$ and $C^r_{u_2}$ intersect each other over some common node $v$,
if $v$ happens to be influenced by both $u_1$ and $u_2$.
The following lemma shows that the low message complexity of algorithm $A$
yields a good probability for all influence clouds to be disjoint from
each other.

Hereafter, we fix a run $\alpha$ of algorithm $A$. 
Let ${N}$ be the event that there is no intersection between 
(the node sets of) the influence clouds existing at the end of run $\alpha$, i.e., 
$C^\rho_u \cap C^\rho_{u'} = \emptyset$ for every two initiators $u$ and $u'$.
Let $M$ be the event that algorithm $A$ sends no more than $f(n)$ messages
in the run $\alpha$.

\begin{lemma} \label{lem:noint}
Assume that $\Prob{M} \ge 1-1/n$.
If $f(n) \in o(\sqrt{n})$, then 
$\Prob{N \wedge M} \ge 1 - \frac{1}{n}  - \frac{f^2(n)}{n-f(n)} \in 1-o(1)$.
\end{lemma}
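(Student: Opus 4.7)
The plan is to reduce the claim to bounding $\Prob{\bar N \wedge M}$ via $\Prob{N \wedge M} = \Prob{M} - \Prob{\bar N \wedge M} \ge 1 - 1/n - \Prob{\bar N \mid M}$, so the task becomes showing $\Prob{\bar N \mid M} \le f^2(n)/(n - f(n))$. The key tool will be the principle of deferred decisions on the adversary's random port permutations: for each node $u$, the permutation $\sigma_u$ on $\{1,\dots,n-1\}$ is revealed one entry at a time, whenever $u$ first uses a given port (by sending on it or by a message arriving on it). The destination of any outgoing message on a fresh port is therefore uniformly distributed over the still-unrevealed images of $\sigma_u$.

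Next I would establish the structural fact that, under any fixed total ordering $m_1,\dots,m_T$ of the messages of the run (with $T \le f(n)$ under $M$), $\bar N$ occurs if and only if there exists an index $i$ such that, at the moment $m_i$ is processed (i.e., just after processing $m_1,\dots,m_{i-1}$), its receiver already lies in a cloud different from its sender's cloud. Since clouds only grow when a receiver joins its sender's cloud or when an initiator self-joins, tracing through the messages in the ordering shows that the first cloud merging must be caused by exactly such a bridging message. This reduction lets me charge $\bar N$ to a single message and apply a union bound.

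With this in hand, I would prove by induction on $i$ that just before $m_i$ is processed, the total number of nodes appearing in any cloud is at most $2i-1$: each previous message contributes at most two new cloud members (its receiver, if previously uninfluenced, and its sender, if that message was the sender's own first transmission as an initiator). Consequently the number of nodes lying in clouds other than the sender's own cloud is at most $2(i-1)$. Conditional on the history, deferred decisions ensure that the destination of $m_i$ is uniform over at least $n - f(n)$ candidates, since the sender has revealed at most $f(n)$ entries of $\sigma_u$. Thus the probability that $m_i$ bridges into another cloud is at most $2(i-1)/(n-f(n))$, and summing yields $\Prob{\bar N \mid M} \le \sum_{i=1}^{f(n)} 2(i-1)/(n-f(n)) \le f^2(n)/(n-f(n))$, which combined with $\Prob{M} \ge 1 - 1/n$ gives the lemma.

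The main obstacle I anticipate is handling the synchronous model's ability to deliver several messages per round, which can jointly create an intersection (for instance, two messages from different clouds arriving at the same previously uninvolved node). Linearising the messages into a total order and running the deferred-decisions argument over this linearisation sidesteps the issue, but one must verify that the distribution of the real synchronous run is faithfully captured by the sequential processing: the random port choices made within a round are fixed by the senders' states at the start of that round and are independent across senders, so revealing the corresponding entries of the $\sigma_u$ one at a time in the chosen linearisation does not bias anything. A secondary point is confirming the $n - f(n)$ lower bound on the denominator, which follows because each sender appears in at most $f(n)$ messages and each such message reveals at most one new entry of its own permutation.
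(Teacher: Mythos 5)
Your proposal is correct and follows essentially the same route as the paper's proof: condition on $M$, use the randomness of the port permutations (the paper argues via ``unexposed ports,'' you via deferred decisions) to bound the probability that any single message bridges into another cloud by roughly $f(n)/(n-f(n))$, and union bound over the at most $f(n)$ messages to get $\Prob{\bar N \mid M} \le f^2(n)/(n-f(n))$. Your version is somewhat more careful than the paper's --- the explicit linearisation of messages and the $2i-1$ induction on cloud sizes tighten details the paper glosses over --- but the underlying argument is the same.
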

\begin{proof}
Consider a round $r$, some cloud $C^r$ and any node $v \in C^r$.
Assuming event $M$, there are at most $f(n)$ nodes that have sent or received a message and may thus be be a part of some other cloud except $C^r$.
Recall that the port numbering of every node was chosen uniformly at random and,
since we conditioned on the occurrence of event $M$, any node knows the 
destinations of at most $f(n)$ of its ports in any round.
Therefore, to send a message to a node in another cloud, $v$ must hit upon one 
of the (at most $f(n)$) ports leading to other clouds, from among its (at least 
$n-f(n)$) yet unexposed ports. 
Let $H_v^r$ be the event that a message 
sent by node $v$ in round $r$ reaches a node $u$ that is already part of some other (non-singleton) cloud.
(Recall that if $u$ is in a singleton cloud due to not having received or sent 
any messages yet, it simply becomes a member of $v$'s cloud.)
We have $\Prob{H_v^r} \le \frac{f(n)}{n-f(n)}$.
During the entire run, $\ell \le f(n)$ messages are sent in total by some nodes $v_1,\dots,v_\ell$ (in possibly distinct clouds) in rounds $r_1,\dots,r_\ell$, yielding events $H_{v_1}^{r_1},\dots,H_{v_\ell}^{r_\ell}$.
Taking a union bound shows that $$\Prob{\bigvee_{i=1}^\ell H_{v_i}^{r_i} \mid M} \le \frac{f^2(n)}{n-f(n)},$$ which is $o(1)$, for 
$f(n) \in o(\sqrt{n})$. 
Observe that $\Prob{N \mid M} = 1 - \Prob{\bigvee_{i=1}^\ell H_{v_i}^{r_i} \mid M}$. 
Since $\Prob{N\wedge M} = \Prob{N\mid M} \cdot \Prob{M}$,
it follows that $\Prob{N \wedge M} \ge \left(1 - \frac{f^2(n)}{n-f(n)}\right)\left(1-\frac{1}{n}\right) \ge 1 - \frac{1}{n} - \frac{f^2(n)}{n-f(n)} \in 1 - o(1)$, as required.
\end{proof}

We next consider {\em potential cloud configurations}, namely,
$Z=\langle \sigma_0, \sigma_1, \ldots, \sigma_k \rangle$,
where $\sigma_i \in \Sigma$ for every $i$, and more generally,
{\em potential cloud configuration sequences} ${\bar Z}^r=(Z^1,\ldots,Z^r)$, 
where each $Z^i$ is a potential cloud configuration,
which may potentially occur as the configuration tuple of some 
influence clouds in round $i$ of some execution of Algorithm $A$
(in particular, the lengths of the cloud configurations $Z^i$ 
are monotonely non-decreasing).
We study the occurrence probability of potential cloud configuration sequences.

We say that the potential cloud configuration
$Z=\langle \sigma_0, \sigma_1, \ldots, \sigma_k \rangle$ {\em is realized}
by the initiator $u$ in round $r$ of execution $\alpha$ 
if the influence cloud ${\cal IC}^r_u = (C^r_u, S^r_u)$ 
has the same node states in $S^r_u$ as those of $Z$, or more formally, 
$S^r_u = \langle
\sigma_r(u,\alpha), \sigma_r(w_1,\alpha),\ldots, \sigma_r(w_k,\alpha) \rangle$,
such that $\sigma_r(u,\alpha) = \sigma_0$ and
$\sigma_r(w_i,\alpha) = \sigma_i$ for every $i\in [1,k]$.
In this case, the influence cloud ${\cal IC}^r_u$ is referred to as a
{\em realization} of the potential cloud configuration $Z$.
(Note that a potential cloud configuration may have many different 
realizations.)

More generally, we say that the potential cloud configuration sequence
${\bar Z}^r=(Z^1,\ldots,Z^r)$ is realized by the initiator $u$ 
in execution $\alpha$ if for every round $i = 1,\ldots, r$, the influence cloud
${\cal IC}^i_u$ is a realization of the potential cloud configuration $Z^i$.
In this case, the sequence of influence clouds of $u$ up to round $r$, 
${\bar{\cal IC}}^r_u = \langle {\cal IC}^1_u,\ldots,{\cal IC}^r_u\rangle$,
is referred to as a realization of ${\bar Z}^r$.
(Again, a potential cloud configuration sequence may have many different 
realizations.)

For a potential cloud configuration $Z$, let $E^r_u(Z)$ be the event that 
$Z$ is realized by the initiator $u$ in (round $r$ of) the run 
of algorithm $A$.
For a potential cloud configuration sequence ${\bar Z}^r$, let
$E_u({\bar Z}^r)$ denote the event that ${\bar Z}^r$ is realized 
by the initiator $u$ in (the first $r$ rounds of) the run of algorithm $A$.

\begin{lemma} 
\label{lem:clouds}
Restrict attention to executions of algorithm $A$ that satisfy event $N$, 
namely, in which all final influence clouds are disjoint.
Then $\Prob{E_u({\bar Z}^r)} = \Prob{E_v({\bar Z}^r)}$
for every $r \in [1,\rho]$, every potential cloud configuration sequence ${\bar 
  Z}^r$, and every two initiators $u$ and $v$.
\end{lemma}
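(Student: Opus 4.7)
The plan is to prove the lemma by a measure-preserving symmetry argument. Let $\phi$ be the transposition on $V$ that swaps $u$ and $v$ and fixes every other node. I will construct a measurable bijection $\Phi$ on the sample space of executions, show it is measure-preserving, and show it carries $E_v(\bar Z^r) \cap N$ onto $E_u(\bar Z^r) \cap N$. This immediately yields $\Prob{E_u(\bar Z^r) \wedge N} = \Prob{E_v(\bar Z^r) \wedge N}$, i.e., the claimed equality on the restricted sample space.

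First, I would parametrise each run by its two independent sources of randomness: the tuple of private coin-flip sequences $(\omega_w)_{w\in V}$, i.i.d.\ across nodes, and the tuple of port permutations $(\pi_w)_{w\in V}$, where each $\pi_w$ is independently uniform over bijections from $\{1,\dots,n-1\}$ to $V\setminus\{w\}$. Given $(\omega,\pi)$, I would define $\Phi(\omega,\pi)=(\omega',\pi')$ by
\[
\omega'_w \;=\; \omega_{\phi(w)},\qquad \pi'_w(k)\;=\;\phi\bigl(\pi_{\phi(w)}(k)\bigr).
\]
A short calculation then shows that $(\omega',\pi')$ is distributed exactly like $(\omega,\pi)$: the $\omega'_w$ are a deterministic permutation of an i.i.d.\ family, and each $\pi'_w$ is the composition of a uniform bijection with the fixed bijection $\phi$, hence itself uniform over bijections $\{1,\dots,n-1\}\to V\setminus\{w\}$, with independence across $w$ preserved. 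Thus $\Phi$ is measure-preserving, and since $\phi$ is an involution, $\Phi$ is its own inverse.

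Second, I would show that $\Phi$ maps $E_v(\bar Z^r)\cap N$ onto $E_u(\bar Z^r)\cap N$. The event $N$ is clearly invariant under node relabelling, so it is preserved by $\Phi$. For the $E$-part, note that in the transformed run $u$ now holds $v$'s old randomness while $v$ holds $u$'s old randomness, and every other node is unchanged. Assuming $N$ in the original run, $u\notin C^\rho_v$, so every non-source member of $v$'s cloud is fixed by $\phi$. Tracing the algorithm step by step, one checks that in $\Phi(\omega,\pi)$ node $u$ becomes an initiator, its first message travels through the modified port $\pi'_u$ to the same physical node that $v$'s first message reached in $(\omega,\pi)$, and inductively every subsequent message inside $u$'s cloud lands on the same node as the corresponding message of $v$'s cloud in the original run. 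Consequently every cloud member performs the same sequence of state transitions, so $u$'s cloud in $\Phi(\omega,\pi)$ realises exactly the configuration sequence $\bar Z^r$.

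The main obstacle I expect is making the inductive tracing of messages fully rigorous, i.e.\ formally verifying that the definition $\pi'_w(k)=\phi(\pi_{\phi(w)}(k))$ really does reroute every intra-cloud message in a way that mirrors the original run's dynamics. Event $N$ is exactly what removes the apparent difficulty: because no message ever crosses between $v$'s cloud and the rest of the system (and in particular never reaches $u$) in the original run, the entire joint history of $v$'s cloud depends only on randomness local to that cloud, and the single swap encoded by $\phi$ transports that history faithfully onto $u$'s cloud in the transformed run. Combining the two halves, $\Phi$ is a measure-preserving bijection between the two events, and the lemma follows.
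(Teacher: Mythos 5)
Your proof is correct, but it takes a genuinely different route from the paper's. The paper argues by induction on the round number $r$: conditioned on $E_u({\bar Z}^{r-1})$ and $E_v({\bar Z}^{r-1})$, it compares, state by state, the probability that the $j$th member of $u$'s cloud versus the $j$th member of $v$'s cloud transitions into $\sigma_j$, using the fact that the corresponding neighbours are in identical states and hence emit identical messages with identical probabilities. You instead build a single global symmetry: the involution $\Phi$ that swaps the coin tapes of $u$ and $v$ and conjugates every node's port permutation by the transposition $\phi$. This is sound, and the step you flag as the main obstacle is actually easier than you make it: rather than arguing that intra-cloud messages land on ``the same physical node'' (which is what forces you to invoke $N$ and the claim that non-source cloud members are fixed by $\phi$), prove by a one-line induction on rounds that in $\Phi(\omega,\pi)$ every node $w$ performs exactly the actions that $\phi(w)$ performed in $(\omega,\pi)$, with each message redirected through $\phi$ --- note that incoming port numbers are also preserved, since $(\pi'_w)^{-1}(\phi(y))=\pi_{\phi(w)}^{-1}(y)$. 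The transformed run is then the full $\phi$-relabelling of the original, so $u$ is an initiator realizing ${\bar Z}^r$ in $\Phi(\omega,\pi)$ iff $v$ is one in $(\omega,\pi)$, and $N$ is preserved because clouds map to $\phi$-images of clouds. In particular your argument yields the stronger \emph{unconditional} equality $\Prob{E_u({\bar Z}^r)}=\Prob{E_v({\bar Z}^r)}$, with the restriction to $N$ obtained for free by intersecting. What each approach buys: yours cleanly isolates the two symmetry assumptions doing the work (identical anonymous code and uniformly random port assignments) and avoids the conditional-probability bookkeeping of the paper's induction; the paper's round-by-round argument is more elementary in that it never manipulates the underlying sample space, and it exhibits explicitly why the realization probabilities agree round by round, which is the form in which the lemma is subsequently applied.
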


\begin{proof}
The proof is by induction on $r$.
Initially, in round $1$, all possible influence clouds of algorithm $A$ are
singletons, i.e., their node sets contain just the initiator.
Neither $u$ nor $v$ have received any messages from other nodes.
This means that
$\Prob{\sigma_1(u)=s} = \Prob{\sigma_1(v)=s}$ for all $s\in \Sigma$,
thus any potential cloud configuration $Z^1=\langle s\rangle$ has 
the same probability of occuring for any initiator, implying the claim.

Assuming that the result holds for round $r-1\ge 1$, 
we show that it still holds for round $r$.
Consider a potential cloud configuration sequence
${\bar Z}^r=(Z^1,\ldots,Z^r)$ and two initiators $u$ and $v$.
We need to show that ${\bar Z}^r$ is equally likely to be realized 
by $u$ and $v$, conditioned on the event $N$.
By the inductive hypothesis, the prefix ${\bar Z}^{r-1}=(Z^1,\ldots,Z^{r-1})$
satisfies the claim. Hence it suffices to prove the following.
Let $p_u$ be the probability of the event $E^r_u(Z^r)$
conditioned on the event $N \wedge E_u({\bar Z}^{r-1})$.
Define
the probability $p_v$ similarly for $v$. 
Then it remains to prove that $p_u=p_v$.

To do that we need to show, for any state $\sigma_j \in Z^r$, that
the probability that $w_{u,j}$, the $j$th node in ${\cal IC}^r_u$, 
is in state $\sigma_j$, conditioned on the event $N \wedge E_u({\bar Z}^{r-1})$,
is the same as the probability that $w_{v,j}$, the $j$th node in ${\cal IC}^r_v$,
is in state $\sigma_j$, conditioned on the event $N \wedge E_v({\bar Z}^{r-1})$.

There are two cases to be considered. The first is that the potential 
influence cloud $Z^{r-1}$ has $j$ or more states. Then by our assumption
that events $E_u({\bar Z}^{r-1})$ and $E_v({\bar Z}^{r-1})$ hold, 
the nodes $w_{u,j}$ and $w_{v,j}$ 
were already in $u$'s and $v$'s influence clouds, respectively, 
at the end of round $r-1$. 
The node $w_{u,j}$ changes its state from its previous state, $\sigma'_j$,
to $\sigma_j$ on round $r$ as the result
of receiving some messages $M_1,\ldots,M_\ell$ from neighbors 
$x^u_1,\ldots,x^u_\ell$ in $u$'s influence cloud ${\cal IC}^{r-1}_u$, 
respectively. In turn, node $x^u_j$ sends message $M_j$ to $w_{u,j}$ on round $r$
as the result of being in a certain state $\sigma_r(x^u_j)$ at the beginning 
of round $r$ (or equivalently, on the end of round $r-1$) 
and making a certain random choice 
(with a certain probability $q_j$ for sending $M_j$ to $w_{u,j}$).
But if one assumes that the event $E_v({\bar Z}^{r-1})$ holds, namely, that
${\bar Z}^{r-1}$ is realized by the initiator $v$,
then the corresponding nodes
$x^v_1,\ldots,x^v_\ell$ in $v$'s influence cloud ${\cal IC}^{r-1}_v$
will be in the same respective states
($\sigma_r(x^v_j) = \sigma_r(x^u_j)$ for every $j$) on the end of round 
$r-1$, and therefore will send
the messages $M_1,\ldots,M_\ell$ to the node $w_{v,j}$ with the same 
probabilities $q_j$. Also, on the end of round $r-1$, the node $w_{v,j}$ is in 
the same state $\sigma'_j$ as $w_{u,j}$ (assuming event $E_v({\bar Z}^{r-1})$).
It follows that the node $w_{v,j}$ changes its state 
to $\sigma_j$ on round $r$ with the same probability as the node $w_{u,j}$.

The second case to be considered is when the potential influence cloud 
$Z^{r-1}$ has fewer than $j$ states. This means (conditioned on the events 
$E_u({\bar Z}^{r-1})$ and $E_v({\bar Z}^{r-1})$ respectively) that the nodes 
$w_{u,j}$ and $w_{v,j}$ were not in the respective influence clouds 
on the end of round $r-1$.
Rather, they were both passive nodes. 
By an argument similar to that made for round 1, any pair of (so far) 
passive nodes have equal probability of being in any state. Hence
$\Prob{\sigma_{r-1}(w_{u,j})=s} = \Prob{\sigma_{r-1}(w_{v,j})=s}$ 
for all $s\in \Sigma$.
As in the former case, the node $w_{u,j}$ changes its state from its previous 
state, $\sigma'_j$, to $\sigma_j$ on round $r$ as the result
of receiving some messages $M_1,\ldots,M_\ell$ from neighbors 
$x^u_1,\ldots,x^u_\ell$ that are already in $u$'s influence cloud 
${\cal IC}^{r-1}_u$, respectively. By a similar analysis, 
it follows that the node $w_{v,j}$ changes its state to $\sigma_j$ on round $r$ 
with the same probability as the node $w_{u,j}$.
\end{proof}

We now conclude that for every potential cloud configuration $Z$, 
every execution $\alpha$ and every two initiators $u$ and $v$,
the events $E^\rho_u(Z)$ and $E^\rho_v(Z)$ are equally likely. 
More specifically, we say that the potential cloud configuration $Z$ 
is \emph{equi-probable for initiators $u$ and $v$} if 
$\Prob{E^\rho_u(Z) \mid N} = \Prob{E^\rho_v(Z) \mid N}$.
Although a potential cloud configuration $Z$ may be the end-colud of many 
different potential cloud configuration sequences, and each such 
potential cloud configuration sequence may have many different realizations,
the above lemma implies the following (integrating over all possible choices).

\begin{corollary} 
\label{cor:clouds}
Restrict attention to executions of algorithm $A$ that satisfy event $N$, 
namely, in which all final influence clouds are disjoint.
Consider two initiators $u$ and $v$ and a potential cloud configuration $Z$.
Then $Z$ is equi-probable for $u$ and $v$.
\end{corollary}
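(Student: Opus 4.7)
The plan is to deduce Corollary~\ref{cor:clouds} from Lemma~\ref{lem:clouds} by marginalizing over the entire history of the initiator's influence cloud. Lemma~\ref{lem:clouds} equates \emph{sequence-level} probabilities for $u$ and $v$, whereas the corollary only concerns the final configuration $Z$; summing the sequence-level equalities over all histories ending in $Z$ should deliver the claim.

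Concretely, I would first fix $Z$ and introduce the collection $\mathcal{S}_Z$ of potential cloud configuration sequences ${\bar Z}^r = (Z^1,\ldots,Z^{r-1},Z)$ that can arise as the first $r$ rounds of an execution of $A$ satisfying event $N$ and that are \emph{terminal}, in the sense that no further messages are sent after round $r$ so the configuration of the initiator's cloud no longer changes. By definition of $\rho$, for every execution $\alpha$ compatible with $N$ in which $u$'s final cloud configuration is $Z$, there is a unique ${\bar Z}^{\rho(\alpha)} \in \mathcal{S}_Z$ realized by $u$. Consequently the event $E^\rho_u(Z)\cap N$ decomposes as the disjoint union
$$E^\rho_u(Z) \cap N \;=\; \bigcup_{{\bar Z}^r \in \mathcal{S}_Z} \bigl( E_u({\bar Z}^r) \cap N \cap \{\rho = r\} \bigr),$$
and an entirely analogous decomposition holds with $v$ in place of $u$.

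Next I would apply Lemma~\ref{lem:clouds} term by term. For each ${\bar Z}^r \in \mathcal{S}_Z$, the lemma gives $\Prob{E_u({\bar Z}^r) \mid N} = \Prob{E_v({\bar Z}^r) \mid N}$. Moreover, the event $\{\rho = r\}$ is encoded by properties of the sequence ${\bar Z}^r$ itself (no transitions occur after round $r$), with the complementary possibility of messages leaving the cloud being ruled out by conditioning on $N$; hence it factors identically into the $u$ and the $v$ decomposition. Summing the equal sequence-level probabilities over $\mathcal{S}_Z$ then yields $\Prob{E^\rho_u(Z) \mid N} = \Prob{E^\rho_v(Z) \mid N}$, which is the equi-probability asserted by the corollary.

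The main obstacle I foresee is the interaction of the random stopping round $\rho$ with the fixed-round statement of Lemma~\ref{lem:clouds}. A clean workaround is to fix a sufficiently large deterministic round $R$ (on the finite-message event of Lemma~\ref{lem:noint} each run of $A$ stabilizes at some round) and restate everything in terms of $E^R_u(Z)$: since cloud configurations are constant from $\rho$ onwards, $E^R_u(Z) = E^\rho_u(Z)$, and Lemma~\ref{lem:clouds} applies directly to each length-$R$ sequence appearing in the decomposition, so the summation becomes unambiguous.
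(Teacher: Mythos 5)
Your proposal is correct and matches the paper's own (very terse) justification: the paper simply notes that a configuration $Z$ may terminate many different potential cloud configuration sequences and that Lemma~\ref{lem:clouds} implies the corollary by ``integrating over all possible choices,'' which is precisely the marginalization over $\mathcal{S}_Z$ that you carry out. Your extra care with the random stopping round $\rho$ (replacing it by a fixed round $R$ at which the cloud configuration has stabilized) is a detail the paper glosses over, but it does not change the argument.
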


By assumption, algorithm $A$ succeeds with probability at least $1/e + \eps$, for some fixed constant $\eps>0$.
Let $S$ be the event that $A$ elects exactly one leader.
We have
$$1/e + \eps \le \Prob{S} \le \Prob{S \mid M \wedge N} \Prob{M \wedge N} + \Prob{\text{not $(M\wedge N)$}}.$$
By Lemma~\ref{lem:noint}, we know that $\Prob{M \wedge N} \in 1 - o(1)$ and $\Prob{\text{not $(M\wedge N)$}} \in o(1)$, and thus it follows that
\begin{align}
  \Prob{S \mid M \wedge N} \ge \frac{{1}/{e} + \eps - o(1)}{1 - o(1)} > \frac{1}{e},
  \label{eq:succN}
\end{align}
for sufficiently large $n$.
By Cor.~\ref{cor:clouds}, each of the initiators has the same probability $p$ 
of realizing a potential cloud configuration where some node is a leader.
Assuming that events $M$ and $N$ occur, it is immediate that $0<p<1$.
Let $X$ be the random variable that represents the number of disjoint influence clouds.
Recall that algorithm $A$ succeeds whenever event $S$ occurs.
Its success probability assuming that $X=c$, at most $f(n)$ messages are sent, and all influence clouds are disjoint, is given by
\begin{align}
    \Prob{S \mid M \wedge N \wedge (X=c)} = c p (1-p)^{c-1}. \label{eq:succ}
\end{align}
For any given $c>0$, the value of \eqref{eq:succ} is maximized if
$p=\frac{1}{c}$, which yields that 
$\Prob{S \mid M \wedge N \wedge (X=c)} \le 1/e$ for any $c$.
It follows that $\Prob{S \mid M \wedge N} \le 1/e$ as well.
This, however, is a contradiction to \eqref{eq:succN} and completes the proof of
Theorem~\ref{thm:lb} for algorithms without unique identifiers.

We now argue why our result holds for any algorithm $B$ that assumes that 
nodes are equipped with unique ids (chosen by the adversary).
Let $S_{B}$ be the event that $B$ succeeds in leader election.
Suppose that $B$ sends only $f(n) \in o(\sqrt{n})$ messages with high probability but $\Prob{S_{B}} \ge 1/e + \eps$, for some constant $\eps > 0$.
Now consider an algorithm $B'$ that works in a model where nodes do not have ids.
Algorithm $B'$ is identical to $B$ with the only difference that before performing any other 
computation, every node generates a random number from the range $[1,n^4]$ and uses this
value in place of the unique id required by $B$.
Let $I$ be the event that all node ids are distinct; clearly  $\Prob{I}\ge 1 - 1/n$.
By definition of $B'$, we know that 
$\Prob{S_B} = \Prob{S_{B'} \mid I}$ and, from the anonymous case above, we get
$\Prob{S_{B'} \mid I}\Prob{I} \le \Prob{S_{B'}} \le 1/e + o(1),$ since only $o(\sqrt{n})$ messages are sent with high probability by $B'$.
It follows that
$\Prob{S_{B'} \mid I} \le \frac{1/e + o(1)}{1-1/n} < 1/e + o(1),$ and thus also $\Prob{S_B} \le 1/e + o(1)$, which is a contradiction.
This completes the proof of Theorem~\ref{thm:lb}.

\section{Conclusion}
\label{sec: conclusions}
We studied the role played by randomization in distributed leader election.
Some open questions on randomized leader election are raised by our 
work:
(1) Can we  improve the message complexity and/or running time for general 
graphs?
(2) Is there a separation between the message complexity for algorithms that succeed with high probability versus algorithms that achieve leader election with large constant probability?

\section*{References}

%
%
%
%
%

%

\end{document}